\documentclass[letterpaper, 10 pt, conference]{ieeeconf}  

\IEEEoverridecommandlockouts                              
\usepackage{mathptmx} 
\usepackage{times} 
\usepackage{amsmath} 
\usepackage{amssymb}  
\usepackage{graphicx}
\usepackage{tcolorbox}
\usepackage[utf8]{inputenc}
\usepackage{amsfonts}
\usepackage{dsfont}
\usepackage{bm}
\usepackage[backend=bibtex, style=ieee, doi=false, url=false, isbn=false, date=year]{biblatex}
\usepackage{xurl}
\usepackage{hyperref}
\usepackage{todonotes}
\usepackage{comment}
\usepackage{circuitikz}

\graphicspath{{./Figures/}}

\DeclareMathOperator{\sign}{sign}

\newcommand{\R}{\mathbb{R}}

\newtheorem{theorem}{Theorem}  

\newtheorem{proposition}{Proposition}

\newtheorem{remark}{Remark}

\newtheorem{example}{Example}

\title{\LARGE \bf
Singular Port-Hamiltonian Systems Beyond Passivity
}

\author{Henrik Sandberg, Kamil Hassan, Heng Wu 
\thanks{This work was supported in part by the Swedish Research Council (Grant~2023-04770), Nordic Energy Research (project PBC-IBR), and the Swedish Energy Agency through the European CETPartnership (project ProRES).}
\thanks{H. Sandberg and K.~Hassan are with the Department of Decision and Control Systems, KTH Royal Institute of Technology, Stockholm, Sweden
        {\tt\small \{hsan,kamilha\}@kth.se}}%
\thanks{H.~Wu, formerly with Aalborg University, Denmark, is now with the School of Electrical Engineering, 
Southeast University, Nanjing, China {\tt\small hengwu@seu.edu.cn}}
}%

\begin{document}
\maketitle
\thispagestyle{empty}
\pagestyle{empty}

\begin{abstract}
In this paper, we investigate a class of port-Hamiltonian systems with singular vector fields. We show that, under suitable conditions, their interconnection with passive systems ensures convergence to a prescribed non-equilibrium steady state. At first glance, this behavior appears to contradict the seemingly passive structure of port-Hamiltonian systems, since sustaining a non-equilibrium steady state requires continuous power injection. We resolve this apparent paradox by showing that the singularity in the vector field induces a sliding mode that contributes effective energy, enabling maintenance of the steady state and demonstrating that the system is not passive. Furthermore, we consider regularizations of the singular dynamics and show that the resulting systems are cyclo-passive, while still capable of supplying the required steady-state power. These results clarify the role of singularities in port-Hamiltonian systems and provide new insight into their energetic properties.
\end{abstract}

\section{Introduction}\label{sec:intro}
In the recent paper~\cite{kong_control_2024}, a dynamical model with an apparent port-Hamiltonian structure was proposed for power inverter control. Remarkably, despite its seemingly passive external behavior, the system operates as a power source in steady state. This apparent paradox challenges the standard intuition associated with passive port-Hamiltonian systems. 

A closer inspection reveals that the underlying mechanism stems from singularities in the vector field, which modify the system’s energy-based interpretation. Motivated by this observation, this paper introduces a simple yet fairly general class of port-Hamiltonian systems with singular vector fields and provides a theoretical mechanism to explain this unconventional behavior.

In~\cite{kong_control_2024}, the authors adopt a passivity-based framework to argue for the stability of the inverter controller when interconnected with a passive system (i.e., a load). However, since the inverter operates as a power source in steady state, standard passivity-based arguments are not directly applicable to guarantee convergence to the desired operating point. This limitation is closely related to the so-called \emph{dissipation obstacle} discussed in~\cite{ortega_putting_2001}, which highlights that when stabilization of a desired steady state requires nonzero dissipation, a purely passive controller is insufficient.

Dissipativity theory, together with passivity- and energy-based control methods, has a long and successful history in the control literature; see, e.g.,~\cite{willems_dissipative_1972, moylan_dissipative_2014,ortega_putting_2001} and the references therein. In this work, we primarily adopt a port-Hamiltonian framework~\cite{van_der_schaft_port-hamiltonian_2014}. The class of singular port-Hamiltonian systems considered here bears similarities to the notion of cyclo-dissipativity~\cite{willems_qualitative_1974,van_der_schaft_cyclo-dissipativity_2021} (or cyclo-passivity~\cite{moylan_dissipative_2014}), a connection that will be further discussed in the sequel.

The contributions of this paper are summarized as follows:
\begin{enumerate}
    \item Singular port-Hamiltonian systems: We introduce a class of singular port-Hamiltonian systems and show that their passivity and possibility to act as an infinite energy source depend critically on how the output is defined on a singular set of measure zero.
    \item Stability under interconnection: Using passivity and sliding-mode arguments, we derive conditions under which the singular systems, when interconnected with static passive systems, converge in finite time to a prescribed singular set and can act as a continuous energy source on that set.
    \item Implementable approximations: We propose two implementable approximations of the singular systems, in which the infinite singularity is regularized, and show that one satisfies a modified dissipation inequality outside the singular set, while the other is cyclo-passive.
\end{enumerate}

The remainder of this paper is organized as follows. Section~\ref{sec:pH} reviews the port-Hamiltonian framework and introduces the proposed systems featuring a singular set. Section~\ref{sec:stab_interconnect} presents a stability analysis of the open-loop systems and of their closed-loop interconnection with static passive systems.
Section~\ref{sec:singularity} develops two implementable system approximations. Finally, Sections~\ref{sec:examples} and~\ref{sec:conclusion} provide numerical simulations and concluding remarks, respectively, including a discussion of open research directions.

\subsubsection*{Notation} For a scalar-valued function~$H:\R^n \to \R$, we define the gradient as the column vector~$\nabla H(x) := \begin{bmatrix}
    \frac{\partial H}{\partial x_1} & \frac{\partial H}{\partial x_2} & \dots & \frac{\partial H}{\partial x_n}
\end{bmatrix}^\top \in \R^{n\times 1}$. We use $\dot{H}$ to denote the total time derivative of $H$. That is, $\dot H(x):= \frac{d}{dt}H(x(t)) = \nabla H(x(t))^\top \dot x(t)$ (leaving out the time argument~$t$ in the following when it is clear from the context). Vectors $x,\,\dot x\in\mathbb{R}^n$ are identified with column vectors of dimension $n\times 1$. For symmetric matrices $A$ and $B$, $A \prec B$ ($A \preceq B$) means $A-B$ is negative definite (semidefinite), and $\lambda_{\max}(A)$ is the largest eigenvalue of~$A$.

\section{Port-Hamiltonian Systems}  \label{sec:pH}
In this section, we first review the formalism of port-Hamiltonian systems~\cite{van_der_schaft_port-hamiltonian_2014} and then introduce the class of singular systems under consideration.

\subsection{Port-Hamiltonian Systems}
Input--state--output port-Hamiltonian systems~\cite{van_der_schaft_port-hamiltonian_2014} are dynamical state-space systems in the form
\begin{equation} \label{eq:pH}
    \begin{aligned}
        \dot x & = [J(x,t) - R(x,t)]\nabla H(x) + B(x)u \\
        y & = B(x)^\top \nabla H(x),
    \end{aligned}
\end{equation}
where $x\in\mathbb{R}^n$ (state), $u,\,y\in\mathbb{R}^m$ (input, output), $H(x)\in\mathbb{R}$ (Hamiltonian/storage function), $ J(x,t)^\top=- J(x,t)\in\mathbb{R}^{n\times n}$ (interconnection matrix), $R(x,t)=R(x,t)^\top \in\mathbb{R}^{n\times n}$ (dissipation matrix), and $B(x)\in\mathbb{R}^{n\times m}$ (input matrix), for all $x,t$. For the remainder of the paper, \emph{we assume~\eqref{eq:pH} admit solutions~$x(t)$, $t\geq 0$, in the sense of Filippov~\cite{cortes_discontinuous_2008}, for any initial state $x(0)$ and applied control $u=u(t)$, $t\geq 0$.}
We also make the standing assumption that the dissipation matrix~$R(x,t)$ is uniformly positive definite: There are positive constants $\epsilon_1,\,\epsilon_2$ such that $\epsilon_1 I \prec R(x,t)\prec \epsilon_2 I$, for all $x,\, t$.
The system~\eqref{eq:pH} then satisfies the differential \emph{dissipation inequality}~\cite{willems_dissipative_1972}
\begin{equation}\label{eq:diss_ineq}
\begin{aligned}
    \dot{H}(x) & = \nabla H(x)^\top \dot{x} \\ & = -\nabla H(x)^\top R(x,t) \nabla H(x) + y^\top u  \leq y^\top u,
\end{aligned}
\end{equation}
by observing $\nabla H(x)^\top J(x,t) \nabla H(x) \equiv 0$.
The Hamiltonian~$H(x)$ is thus a storage (``energy'') function, $y^\top u$ a supply rate (``power injection''), and the system~\eqref{eq:pH} is \emph{passive} if the \emph{storage functions are bounded from below~\cite{willems_dissipative_1972}}. Since we can always add a constant to a Hamiltonian without changing the dynamics, for passive systems, it is henceforth assumed that the minimum value of $H(x)$ is zero. If there is no lower bound on $H(x)$, the system~\eqref{eq:pH} is \emph{cyclo-dissipative with supply rate $y^\top u$}~\cite{willems_qualitative_1974,van_der_schaft_cyclo-dissipativity_2021}, or simply \emph{cyclo-passive}~\cite{moylan_dissipative_2014}. The name derives from the fact that any closed state trajectory in~$\mathbb{R}^n$ satisfies
\begin{equation}\label{eq:cyclodiss}
    \oint y^\top u \,\,dt \geq 0. 
\end{equation}
That is, we cannot retrieve a net supply from the system over any closed cycle. Examples of cyclo-dissipative systems include circuits with positive resistors and possibly negative inductors and capacitors. 

\subsection{A Class of Singular Port-Hamiltonian Systems}
This paper is concerned with the energetic properties of a class of \emph{singular} port-Hamiltonian systems. Their Hamiltonians and input matrices are given by
\begin{equation} \label{eq:pHalmost}
    H(x) = \frac{1}{2} \sigma(x)^2,  \qquad
         B(x)  = \sigma^{-1}(x)\bar B(x)
\end{equation}
with the quadratic form
\begin{equation}\label{eq:sigma}
  \sigma(x) := \frac{1}{2}(x^\top Q x - 1),
\end{equation}
and $\bar B(x)$ continuous and nonzero for all~$x$. 
We assume $Q=Q^\top \in\mathbb{R}^{n\times n}$ is positive definite and introduce the set
\begin{equation*}
    \mathcal{S}=\{x \in \mathbb{R}^n:\,\sigma(x)=0 \Leftrightarrow x^\top Q x = 1\}.
\end{equation*}
The set $\mathcal{S}$ defines the boundary of an ellipsoid and is called the \emph{singular set} of states, since $B(x)$ is singular there. The Hamiltonian~$H$ is a quartic polynomial and 
the set~$\mathcal{S}$ consists of all $x$ of minimum storage: $H(x)=0$.
We have $\sigma(x)<0$ inside of and $\sigma(x)>0$ outside of~$\mathcal{S}$. Since $\bar B(x)$ is nonzero and continuous, the sign of the input matrix~$B(x)$ flips when crossing the ellipsoidal surface~$\mathcal{S}$, and the gain approaches infinity in its vicinity. 

With the choices made in~\eqref{eq:pHalmost}--\eqref{eq:sigma}, the port-Hamiltonian system~\eqref{eq:pH} attains the form:
\begin{equation} \label{eq:sys_nom}  \tag{SpH'}
    \begin{aligned}
        \dot x & = [J(x,t) - R(x,t)]\sigma(x)Qx + \sigma^{-1}(x)\bar B(x)u \\
        y & = c(x)\bar B(x)^\top Qx,
    \end{aligned}
\end{equation}
since
\begin{equation*}
    \nabla H(x)=\sigma(x)Qx,
\end{equation*}
and introducing $c(x):=\sigma(x)/\sigma(x)$. 
Clearly, $c(x)=1$ for $x\not\in\mathcal{S}$. To clarify the value on the set~$\mathcal{S}$, consider the following passivity argument:
The time derivative of the Hamiltonian along trajectories of~\eqref{eq:sys_nom} is
\begin{equation}\label{eq:diss_ineq_nom}
    \dot H(x) = \sigma(x)x^\top Q \dot x = -\sigma^2(x) x^\top Q R(x,t) Q x + y^\top u,
\end{equation}
noting $x^\top Q J(x,t) Qx \equiv 0$. Since the storage $H(x)\geq 0$ and the dissipation rate $\sigma^2(x) x^\top Q R(x,t) Q x\geq 0$, for all $x,\,t$, \eqref{eq:sys_nom} should be passive. This requires $c(x)=0$ for $x\in\mathcal{S}$ and
\begin{equation}\label{eq:c_x}
    c(x) = \mathbf{1}_{\mathcal{S}^\text{c}}(x),
\end{equation}
where $\mathbf{1}_{\mathcal{S}^\text{c}}$ is the indicator function on the complement of~$\mathcal{S}$. Equation~\eqref{eq:c_x} ensures a zero output~$y$ for~$x\in\mathcal{S}$ and precludes the possibility of extracting supply from states where $H(x)=0$. 
Example~\ref{ex:circuitexample} elaborates further on this point in a concrete scenario.
When $R(x,t)\equiv 0$, the system is $\emph{lossless}$~\cite{willems_dissipative_1972}. 
\begin{remark}
    We consider solutions to~\eqref{eq:sys_nom} in the sense of Filippov and only specify $\sigma^{-1}(x)\in (-\infty,\infty)$, $x \in \mathcal{S}$; see~\cite{utkin_sliding_1992,slotine_applied_1991,cortes_discontinuous_2008}. Bounding the singularity set is treated in Section~\ref{sec:singularity}.
\end{remark}

\begin{example}[\eqref{eq:sys_nom} connected to resistor~$R_\text{test}$] \label{ex:circuitexample}
\begin{figure}[tb]
    \centering
    \begin{circuitikz}[american voltages,scale=0.85,transform shape]
    \draw
    (0,0) to[V, l_=SpH'/SpH, invert, v^>={$v=y$}] (0,3)
          to[short, i>^={$i=-u$}] (4,3)
          to[R, l_=$R_\text{test}$] (4,0)
          -- (0,0);
    \end{circuitikz}
    \caption{Singular port-Hamiltonian system modeling a voltage source, connected to a test resistor~$R_\text{test}>0$.}
    \label{fig:circuit}
\end{figure}
Consider the circuit in Fig.~\ref{fig:circuit} and the scalar lossless voltage source dynamics with internal state $x$,
\begin{equation}\label{eq:SpH_ex}
    \dot x = \sigma^{-1}(x)u = \frac{2u}{x^2-1}, \quad y = c(x)x.
\end{equation}
Note this dynamics satisfies~\eqref{eq:sys_nom} with $J=R=0$, $Q=\bar B=1$, $\sigma(x)=(x^2-1)/2$, $H(x)=(x^2-1)^2/8$, and thus $\mathcal{S}=\{-1,1\}$. Let the output denote the voltage across the circuit, $v=y$, and let the input denote the current injected into the source, $i=-u$. Given the connected test resistance, we have $u=-y/R_\text{test}$ and the closed circuit dynamics
\begin{equation} \label{eq:circuit_dyn}
    \dot x = -\frac{2c(x)x}{R_\text{test}(x^2-1)}, \quad y=c(x)x.
\end{equation}
It is easy to verify that for any initial voltage $x(0)=y(0) \in (0,\infty)\setminus \{1\}$, we have $x(t)\to 1$ with \emph{finite-time convergence}\footnote{The solution $x(t)\equiv 1$ is a stable sliding mode since the flow pushes into it from both sides; see~\cite{utkin_sliding_1992,slotine_applied_1991}.}. The dissipated energy into the resistor is $H(x(0))>0$ during the finite-time transient of duration~$t_\text{conv}$, since the source is lossless. 
After convergence to~$1\in\mathcal{S}$, the applied voltage is $y(t)=c(1)=0$, $t\geq t_\text{conv}$, per~\eqref{eq:c_x}. Hence, the voltage instantaneously switches to zero once it reaches~$v=1$, and the source stops delivering power to the resistor. This is necessary because of the passivity of~\eqref{eq:sys_nom}: The storage is at a minimum at $x=1$, and its internal energy is depleted.

At this point, it can be of interest to contemplate the implications of redefining $c(x)\equiv 1$ in~\eqref{eq:SpH_ex}--\eqref{eq:circuit_dyn}. This change seems rather innocent, but has consequences: The trajectories~$x(t)$ are identical, but after convergence it holds $y(t)=c(1)=1$, $t\geq t_\text{conv}$. There, $H(1)=0$, and yet the source delivers an infinite amount of energy over $[t_\text{conv},\infty)$, since a constant nonzero voltage is applied to the resistor indefinitely. Hence, the voltage source~\eqref{eq:SpH_ex} is \emph{no longer a passive system}, even if we just redefined $c(x)$ on the set~$\mathcal{S}$ of zero measure. 
From a practical perspective, the choice $c(x)\equiv 1$ is of interest if the goal is to build a voltage source that delivers a fixed, continuous voltage to passive loads with guaranteed finite-time convergence. \hfill $\blacksquare$
\end{example}

Example~\ref{ex:circuitexample} illustrates that if we redefine $c(x)\equiv 1$, \eqref{eq:sys_nom} may look passive everywhere, except on a set of measure zero. 
This redefinition can be practically useful when designing control systems that behave passively ``almost everywhere.'' Such systems can ensure convergence to a desired target set when interconnected with passive loads, while still allowing for steady outputs that would otherwise be incompatible with strict passivity, as they require an infinite energy supply. In the rest of the paper, we shall therefore study the following systems: 
\begin{tcolorbox}[colback=white,colframe=black,boxrule=0.5pt,top=2pt,
    bottom=2pt]
\begin{equation} \label{eq:SpH}  \tag{SpH}
    \begin{aligned}
        \dot x & = [\bar J(x,t) - \sigma(x)R(x,t)]Qx + \sigma^{-1}(x)\bar B(x)u \\
        y & = \bar B(x)^\top Qx,
    \end{aligned}
\end{equation}
\end{tcolorbox}
\noindent
Here we have introduced $J(x,t)=:\sigma^{-1}(x)\bar J(x,t)$, assuming $\bar J(x,t)$ is continuous everywhere, and again used the convention  $\sigma(x)/\sigma(x)\equiv 1$, to allow for nonzero, continuous autonomous flow across the singular set. Since $J$ and $\bar J$ are anti-symmetric, they do not contribute to the energy exchange, and this choice is less delicate than the choice of output. 
We will generalize Example~\ref{ex:circuitexample} for~\eqref{eq:SpH} in Section~\ref{sec:stab_interconnect}. 
In Section~\ref{sec:singularity}, we regularize~$\sigma^{-1}(x)$ to avoid infinite singularities in the model, and show that 
several properties of~\eqref{eq:SpH} are retained. Numerical examples in Section~\ref{sec:examples} will serve to illustrate that complex convergence behavior can occur when~\eqref{eq:SpH} is interconnected to dynamic passive systems. In particular, after~\eqref{eq:SpH} injects energy into a connected load, it is uncertain whether the system consistently reaches the desired steady state.

\begin{remark}
 The port-Hamiltonian inverter controllers proposed in~\cite{kong_control_2024} were introduced to ensure convergence to a desired periodic steady state that requires nonzero power to sustain when interconnected to passive loads. No proofs were provided there, but Example~\ref{ex:circuitexample} and the analysis to follow will illustrate a mechanism for similar behavior.
\end{remark}

\begin{remark}\label{rem:IDA-PBC}
Using standard port-Hamiltonian systems, the general Interconnection and Damping Assignment Passivity-Based Control (IDA--PBC) framework~\cite{ortega_putting_2001, ortega_interconnection_2002} enables the design of controllers that are “almost passive,” guaranteeing convergence to steady states that would otherwise require an infinite energy supply. In IDA--PBC, this is achieved by introducing an infinite-energy source and employing a state-modulated interconnection structure. In contrast, in~\eqref{eq:SpH}, the energy supply is implicitly encoded in the set~$\mathcal{S}$. While the systems in~\eqref{eq:SpH} are restrictive, they do not involve solutions of partial differential equations as in IDA--PBC. A detailed comparison is beyond the scope of this paper and is left for future work.
\end{remark}

\section{Stability and Feedback Interconnection} \label{sec:stab_interconnect}
The following result characterizes the equilibrium and steady state of the uncontrolled system~\eqref{eq:SpH}.
\begin{proposition}\label{prop:SpH}
The uncontrolled system~\eqref{eq:SpH} ($u\equiv 0$) satisfies the following properties:
\begin{enumerate}
    \item[(i)] The equilibrium $x(t)\equiv 0$ is repelling.
    \item[(ii)] The set~$\mathcal{S}$ is forward-invariant and the trajectories satisfy $\dot x = \bar J(x,t)Qx$ on $\mathcal{S}$.
    \item[(iii)] All trajectories $x(t)$, $t\geq 0$, with inital condition  $x(0)\neq 0$, converge to the singular set $\mathcal{S}$.
    \end{enumerate}
\end{proposition}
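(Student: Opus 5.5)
The natural tool is the scalar $\sigma(x)$ itself, or equivalently the Hamiltonian $H(x)=\frac12\sigma(x)^2$, evaluated along trajectories of the uncontrolled system~\eqref{eq:SpH} with $u\equiv 0$. Since $\nabla\sigma(x)=Qx$, we get for $x\notin\mathcal{S}$
\begin{equation*}
\dot\sigma(x) = x^\top Q\dot x = x^\top Q\bar J(x,t)Qx - \sigma(x)\,x^\top QR(x,t)Qx = -\sigma(x)\,x^\top QR(x,t)Qx ,
\end{equation*}
because $\bar J$ is anti-symmetric. The same identity holds on $\mathcal{S}$ under the convention $\sigma(x)/\sigma(x)\equiv 1$ of~\eqref{eq:SpH}. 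With $u\equiv 0$ the vector field $[\bar J(x,t)-\sigma(x)R(x,t)]Qx$ has no singular term, so Filippov solutions are classical here. The uniform bound $R(x,t)\succ\epsilon_1 I$ gives
\begin{equation*}
x^\top QR(x,t)Qx \geq \epsilon_1\|Qx\|^2 \geq \epsilon_1\lambda_{\min}(Q)^2\|x\|^2 .
\end{equation*}
This single scalar equation drives all three items.

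\textbf{Item (ii).} On $\mathcal{S}$ we have $\sigma=0$, so $\dot\sigma=0$. More precisely, $\sigma(t)$ solves the linear time-varying scalar ODE $\dot\sigma = -a(t)\sigma$ with $a(t)=x^\top QRQx\ge 0$, so $\sigma(t)=\sigma(0)\exp(-\int_0^t a)$. Hence $\sigma(0)=0$ implies $\sigma(t)\equiv 0$, and $\mathcal{S}$ is forward invariant. This formula also shows that $\sigma$ never changes sign. Substituting $\sigma=0$ into the dynamics leaves $\dot x=\bar J(x,t)Qx$ on $\mathcal{S}$.

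\textbf{Item (i).} The origin is an equilibrium since the right-hand side vanishes at $x=0$, and there $\sigma(0)=-\tfrac12$. Inside the ellipsoid $\sigma<0$, so $\dot\sigma = |\sigma|\,x^\top QRQx>0$ for $x\neq 0$. Thus $x^\top Qx$ strictly increases along any nonzero trajectory near $0$. To make this quantitative I would use $V=x^\top Qx$, for which
\begin{equation*}
\dot V = 2\dot\sigma = 2|\sigma|\,x^\top QRQx \geq c\,V
\end{equation*}
for some $c>0$ in a neighborhood $\{V\le 1/2\}$, where $|\sigma|\ge 1/4$. So every nonzero trajectory leaves that neighborhood exponentially, and the origin is repelling.

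\textbf{Item (iii).} This is the main step. I would show $\sigma(t)\to 0$ for $x(0)\neq 0$ and then conclude $\operatorname{dist}(x(t),\mathcal{S})\to 0$, which follows because $\sigma$ is a continuous proper function whose zero set is $\mathcal{S}$. From the explicit formula it suffices to show $\int_0^\infty a(t)\,dt=\infty$, which holds if $\|x(t)\|$ stays bounded away from $0$.

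For an outside start, $\sigma(0)>0$ and $\sigma(t)$ is nonnegative and decreasing. Hence $x^\top Qx\ge 1$ for all time, so $a(t)\ge\epsilon_1\lambda_{\min}(Q)\ge$ a positive constant, and $\sigma$ decays exponentially. For an inside start with $x(0)\neq 0$, $\sigma(t)$ is negative and increasing, so $x(t)^\top Qx(t)\geq x(0)^\top Qx(0)>0$, again bounding $a(t)$ below and giving exponential convergence.

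Boundedness of trajectories, which is needed for existence on $[0,\infty)$, follows since $x^\top Qx$ is monotone toward $1$. The obstacle I anticipate is only bookkeeping about the sign-invariance of $\sigma$ and the uniform lower bound on $a(t)$; the uniform positive definiteness of $R$ handles the latter.
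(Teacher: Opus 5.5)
Your proof is correct. Item (i) is essentially the paper's argument: the paper uses $V=\frac12 x^\top Qx$, notes $\dot V=-\sigma\, x^\top QRQx>0$ near the origin, and invokes the Chetaev-type instability theorem, and your bound $\dot V\ge cV$ is a quantitative refinement of this.

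For (ii) and (iii) you take a genuinely different route. The paper proves (ii) with Nagumo's theorem, using that $\bar J(x,t)Qx$ is orthogonal to the normal $2Qx$. It proves (iii) with LaSalle's invariance principle applied to $H=\frac12\sigma^2$, which gives convergence to $\mathcal{S}\cup\{0\}$, and then excludes the origin by (i). You instead use that $\nabla H=\sigma\nabla\sigma$ and that the skew-symmetric term drops out, so $\dot\sigma=-a(t)\sigma$ is a scalar linear ODE that integrates explicitly.

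This gives more than the proposition asserts. First, $\sigma$ never changes sign, so trajectories never cross $\mathcal{S}$ and $x^\top Qx$ is monotone. Second, $\mathcal{S}$ is invariant for every solution, with no appeal to uniqueness, which Nagumo's theorem needs for strong invariance. Third, the convergence of $\sigma$, and hence of the distance to $\mathcal{S}$, is exponential. Finally, boundedness and forward completeness come for free.

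Your argument also fits the setting better. Here $\bar J(x,t)$ and $R(x,t)$ are time-varying, whereas the cited LaSalle theorem is for autonomous systems. Strictly speaking, the paper's step needs a nonautonomous variant such as LaSalle--Yoshizawa, which your argument bypasses.

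What the paper's Lyapunov/LaSalle framing buys is consistency with the energy-based viewpoint and with the closed-loop analysis of~\eqref{eq:feedback}. There, the extra term $-\sigma^{-1}x^\top Q\bar BK\bar B^\top Qx$ in $\dot\sigma$ destroys the linear structure that your explicit-solution trick relies on.

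One minor caveat: your remark that Filippov solutions are classical here requires $R$ to be continuous in $x$, which the paper does not assume. Nothing breaks, though, since every element of the Filippov set-valued field still yields $\dot\sigma=-\tilde a(t)\,\sigma$ with $\tilde a\ge\epsilon_1\|Qx\|^2$.
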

\begin{proof}
    (i): Clearly, $x=0$ is an equilibrium point. To prove instability, we follow the argument from the instability theorem in~\cite[Theorem~4.3]{khalil_nonlinear_2002}: Consider the positive definite function $V(x)=\frac{1}{2}x^\top Q x$. We have that $\dot V(x) = x^\top Q \dot x = -\sigma(x) x^\top QRQ x$, since $x^\top Q\bar JQx \equiv 0$. Furthermore, $\sigma(x)$ is uniformly \emph{negative} in a neighborhood of $x=0$, and $\dot V(x)>0$ for all $x\neq 0$ in the same neighborhood. Hence, for any $x(0)$ arbitrarily close to (but not equal to) zero, $V(x(t))$ must be an \emph{increasing} function, which proves the equilibrium point is repelling (it has no stable manifold).
    (ii): We have $x\in\mathcal{S} \Leftrightarrow x^\top Q x = 1$. The normal direction of the ellipsoidal surface~$\mathcal{S}$ is $\nabla (x^\top Q x - 1) = 2Qx$. Since the vector field of~\eqref{eq:SpH} evaluated on~$\mathcal{S}$ ($\dot x = \bar J(x,t)Qx$) is perpendicular to the normal vector at every point, $(2Qx)^\top \dot x \equiv 0$,  we conclude the set $\mathcal{S}$ is invariant (by Nagumo’s theorem).
    (iii): We apply LaSalle's theorem~\cite[Theorem~4.4]{khalil_nonlinear_2002}: Consider the non-negative Hamiltonian $H(x)=\frac{1}{2}\sigma(x)^2$, whose level sets $\{H(x)=c\}$ are compact and non-empty for all $c>0$. The function satisfies
    \begin{equation*}
        \dot H(x) = \sigma(x) \dot \sigma(x) = \sigma(x) x^\top Q \dot x = -\sigma(x)^2 x^\top QRQx \leq 0.
    \end{equation*}
    The set of points where $\dot H(x)=0$ is $E:=\mathcal{S} \cup \{0\}$. The set $E$ is forward invariant by~(i)--(ii). Hence all trajectories $x(t)$ will converge to $E$ as $t\to \infty$. However, the point $x=0$ is repelling~(i), so any $x(0)\neq 0$ will have to converge to $\mathcal{S}$.
\end{proof}

Consider~\eqref{eq:SpH} under negative feedback interconnection with a possibly nonlinear and continuous static passive system:
\begin{equation}\label{eq:static_load}
\begin{aligned}
    u  & = -K(y) y, \\
    K(y)& =K(y)^\top \in\mathbb{R}^{m \times m},\quad
    \kappa_1 I  \prec K(y) \prec \kappa_2 I,
\end{aligned}
\end{equation}
for some positive constants $\kappa_1,\,\kappa_2$, and all $y$. 
The closed-loop dynamics become
\begin{equation}\label{eq:feedback}
\begin{aligned}
     \dot x  & = [\bar J(x,t) - \sigma(x)R(x,t)]Qx - \sigma^{-1}(x)\bar B(x)K(y)\bar B(x)^\top Qx \\
     & = [\bar J(x,t) - \sigma(x)R(x,t) - \sigma^{-1}(x)\bar B(x)K(y)\bar B(x)^\top] Qx.
\end{aligned}
\end{equation}
Comparing~\eqref{eq:feedback} with the open-loop system considered in Proposition~\ref{prop:SpH}, we note that there is a new term involving~$\sigma^{-1}(x)$ in the vector field. While the solution $x(t)\equiv 0$ remains an (unstable) equilibrium (the proof is essentially the same as in Proposition~\ref{prop:SpH}), the set~$\mathcal{S}$ is no longer a classical solution but rather a potential sliding mode/surface~\cite{utkin_sliding_1992,slotine_applied_1991} due to the discontinuity.
Hence, we consider $\sigma(x)=0$ as a potential sliding mode and follow a standard argument~\cite{slotine_applied_1991} to check whether the dynamics push trajectories towards it. Generally, this is ensured by verifying $\frac{d}{dt}\frac{1}{2}\sigma(x(t))^2 = \sigma(x)\dot \sigma(x)<0$. Incidentally, this condition coincides with $\dot H$: 
\begin{align*}
     \dot H(x) & = \sigma(x)\dot \sigma(x) = \nabla H(x)^\top \dot x   \\
     & =  -x^\top Q (\sigma^2(x)R(x,t) +\bar B(x)K(y)\bar B(x)^\top ) Q x  < 0,
\end{align*}
for $x\neq 0$. Hence, the trajectories starting in $x(0)\neq 0$ indeed approach $\mathcal{S}$ as summarized next.
\begin{proposition}
    Consider the interconnection of~\eqref{eq:SpH}  and~\eqref{eq:static_load}, as modeled in~\eqref{eq:feedback}. Then every trajectory~$x(t)$ with $x(0)\neq 0$ converges to the set~$\mathcal{S}$.
\end{proposition}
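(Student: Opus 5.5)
We would follow the template of Proposition~\ref{prop:SpH}(iii), using the Hamiltonian $H(x)=\frac{1}{2}\sigma(x)^2$ as a Lyapunov-like function together with the repelling property of the origin. Along Filippov solutions of~\eqref{eq:feedback} away from $\mathcal{S}$, the computation preceding the statement gives
\begin{equation*}
\dot H(x) = -x^\top Q\big(\sigma^2(x)R(x,t)+\bar B(x)K(y)\bar B(x)^\top\big)Qx \leq -\sigma^2(x)\,\epsilon_1 \|Qx\|^2 ,
\end{equation*}
which is strictly negative for $x\notin \mathcal{S}\cup\{0\}$. The sublevel sets $\{H\le c\}$ are compact, so trajectories are bounded and stay in $\{H(x)\le H(x(0))\}$.

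\textbf{Step 1 (origin is repelling).} As in Proposition~\ref{prop:SpH}(i), we use $V(x)=\frac12 x^\top Qx$. Near $x=0$ we have $\sigma(x)\approx -\frac12$ and $\sigma^{-1}(x)\approx -2$, so
\begin{equation*}
\dot V = -\sigma(x)\,x^\top QRQx - \sigma^{-1}(x)\,x^\top Q\bar BK\bar B^\top Qx > 0
\end{equation*}
for $x\neq0$ in a neighborhood. Hence no trajectory with $x(0)\neq 0$ can approach $0$. More precisely, if $x(0)\ne 0$ lies inside $\mathcal{S}$, then $H$ is nonincreasing there, so $\sigma$ is nondecreasing toward $0$. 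Equivalently, $V$ increases as long as $\sigma<0$, so $V(x(t))\ge V(x(0))>0$. This gives a positive lower bound on $\|x(t)\|$ for all $t$.

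\textbf{Step 2 (convergence).} Outside $\mathcal{S}\cup\{0\}$, we obtain a uniform decrease rate on compact sets bounded away from $0$. The set $\{\delta\le \|x\|,\ H(x)\le H(x(0)),\ H(x)\ge\eta\}$ is compact, and $\dot H\le -\gamma(\eta)<0$ on it. So $H$ must drop below any $\eta>0$, giving $\sigma(x(t))\to0$, that is, $\mathrm{dist}(x(t),\mathcal{S})\to 0$. Alternatively, one may invoke LaSalle's theorem for nonsmooth systems~\cite{cortes_discontinuous_2008} with $E=\mathcal{S}\cup\{0\}$ and exclude $0$ by Step 1. In fact, the feedback term gives the stronger estimate
\begin{equation*}
\dot H \le -\kappa_1\|\bar B^\top Qx\|^2 ,
\end{equation*}
which, if $\bar B^\top Qx$ is bounded away from zero near $\mathcal{S}$, yields $\dot{|\sigma|}\le -c/|\sigma|$. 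This would mean finite-time reaching, mirroring Example~\ref{ex:circuitexample}, but that is not needed for the statement.

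\textbf{Main obstacle.} The delicate point is the discontinuity at $\mathcal{S}$: the vector field is unbounded there, so the smooth LaSalle argument does not apply verbatim. Once a trajectory reaches $\mathcal{S}$, we must argue that it stays there, that is, that $\mathcal{S}$ is a sliding surface. For this, we would use the sign condition $\sigma\dot\sigma<0$ on both sides, which holds wherever $\bar B^\top Qx\ne0$ or $R$-dissipation is present. It yields that the Filippov set-valued map on $\mathcal{S}$ contains the tangential convex combination and that $H$ cannot increase from $0$. We would handle this by working with $H$ directly, which is continuous, and noting that $H$ is nonincreasing along every Filippov solution. This follows since every element of the Filippov set-valued field off $\mathcal{S}$ yields a nonpositive derivative, and $H\ge0$ with $H=0$ on $\mathcal{S}$. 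Forward invariance of $\mathcal{S}$ then follows immediately, completing the argument.
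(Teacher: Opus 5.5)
Your proposal is correct and takes essentially the same route as the paper: the paper observes that $\dot H = -x^\top Q(\sigma^2 R + \bar B K\bar B^\top)Qx$ is negative off $\mathcal{S}\cup\{0\}$, notes that the origin remains an unstable equilibrium as in Proposition~\ref{prop:SpH}, and concludes convergence as in the LaSalle argument of Proposition~\ref{prop:SpH}(iii). What you add is a more careful version of the same argument: the explicit decrease bound $\dot H\le -\sigma^2\epsilon_1\|Qx\|^2$ on $\{\|x\|\ge\delta,\ H\ge\eta\}$, and the observation that $V=\sigma+\tfrac12$ is monotone, which keeps $\|x(t)\|$ away from zero. Together these avoid relying on LaSalle for the time-varying $R(x,t)$ and $\bar J(x,t)$.
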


In simulations, we observe that trajectories tend to reach the set~$\mathcal{S}$ in \emph{finite time} and stay there (see Section~\ref{sec:examples} and Example~\ref{ex:circuitexample}). This can be proven under certain conditions, as formalized next.
\begin{proposition}\label{prop:impact_time}
    Consider the interconnection of~\eqref{eq:SpH} and~\eqref{eq:static_load}, as modeled in~\eqref{eq:feedback}, and assume $\bar B\in\mathbb{R}^{n\times n}$ is invertible. Also consider the set
\begin{equation*}
\mathcal{S}_l^\text{c}:=\{x:\,\|x\|_2 \geq l,\, x\not \in \mathcal{S}\},
\end{equation*}
for some $l$ such that $0<l<1/\lambda_{\max}^{1/2}(Q)$ (such that $\|x\|_2=l$ lies completely inside the ellipsoid $\mathcal{S}$).
    \begin{enumerate}
        \item[(i)] Every trajectory $x(t)$ with $x(0)\in\mathcal{S}_l^c$ reaches the set~$\mathcal{S}$ in finite time~$t_\text{conv}$, which satisfies $t_\text{conv}\leq H(x(0))/\eta$, where $\eta$ is defined in~\eqref{eq:diss_ineq_fb}.
        \item[(ii)] The set $\mathcal{S}$ is a sliding manifold, and the corresponding sliding-mode dynamics are given by $\dot x = \bar J(x,t)Qx$.
    \end{enumerate}
\end{proposition}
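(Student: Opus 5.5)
The plan is to obtain a uniform reaching condition from the dissipation identity already computed before the statement, and then integrate it. Along solutions of~\eqref{eq:feedback} we have
\begin{equation*}
\dot H(x) = -\sigma^2(x)\, x^\top QRQx - x^\top Q\bar B(x)K(y)\bar B(x)^\top Qx \leq -x^\top Q\bar B(x)K(y)\bar B(x)^\top Qx .
\end{equation*}
Since $K(y)\succ\kappa_1 I$, the last term is at most $-\kappa_1\|\bar B(x)^\top Qx\|_2^2$. The key step is to bound $\|\bar B(x)^\top Qx\|_2$ from below by a positive constant on the region the trajectory visits. The plan is to establish an estimate of the form
\begin{equation}\label{eq:diss_ineq_fb}
\dot H(x)\leq -\eta,\qquad x\in\mathcal{S}_l^\text{c},
\end{equation}
with $\eta:=\kappa_1\inf \|\bar B(x)^\top Qx\|_2^2>0$, the infimum being taken over the relevant compact set. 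Positivity of the infimum follows because $\bar B(x)$ is invertible and $Q\succ0$, so $\bar B(x)^\top Qx\neq0$ for $x\neq0$. Continuity of $\bar B$ and compactness then give a positive minimum.

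To make this infimum legitimate, I will work on the compact set $\mathcal{C}:=\{x:\|x\|_2\ge l,\ H(x)\le H(x(0))\}$. It is bounded because $H$ is radially unbounded and closed by continuity. It is forward invariant for two reasons. First, $H$ is nonincreasing along solutions. Second, trajectories cannot cross the sphere $\|x\|_2=l$ inward. To see the second point, use $V=\tfrac12 x^\top Qx$: off $\mathcal S$,
\begin{equation*}
\dot V = -\sigma(x)\,x^\top QRQx - \sigma^{-1}(x)\, x^\top Q\bar BK\bar B^\top Qx ,
\end{equation*}
and both terms are positive inside the ellipsoid, where $\sigma<0$. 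So $V$ increases there, and since $\|x\|_2=l$ lies strictly inside $\mathcal S$, a trajectory starting with $\|x\|_2\ge l$ cannot enter the ball $\|x\|_2<l$. Strictly, one needs invariance of $\{V\ge V_{\min}\}$ rather than of the sphere itself. I will handle this by replacing the ball with the largest sublevel set of $V$ contained in it, or simply by noting that the trajectory stays in $\{V\geq \tfrac12\lambda_{\min}(Q)l^2\}$ and defining $\eta$ over that set.

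With \eqref{eq:diss_ineq_fb} in hand, integrating gives
\begin{equation*}
0\le H(x(t))\le H(x(0))-\eta t
\end{equation*}
for as long as $x(t)\notin\mathcal S$. Hence $\mathcal S$, where $H=0$, is reached at some $t_\text{conv}\le H(x(0))/\eta$.

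For (ii), I argue as in Example~\ref{ex:circuitexample}. Near $\mathcal S$ we have $\sigma\dot\sigma=\dot H\le-\eta<0$ from both sides, so the vector fields on both sides point into $\mathcal S$. That is the standard sliding condition, so $\mathcal S$ is an attractive sliding manifold. The sliding dynamics come from Filippov's equivalent control. The term $\sigma^{-1}\bar B K\bar B^\top Qx$ is normal-directed in effect and is replaced by whatever keeps $\dot\sigma=0$. The term $\sigma R Qx$ vanishes on $\mathcal S$. What remains, after projecting onto the tangent space, is $\bar J Qx$, which is already tangent because $x^\top Q\bar JQx=0$.

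I expect the main obstacle to be this equivalent-control computation. The singular term's direction $\bar BK\bar B^\top Qx$ is generally not parallel to the normal $Qx$. So I must verify that the Filippov convex combination of the two one-sided limits, which have infinite magnitude, yields exactly $\bar JQx$ with no tangential residue. I would handle this by a rescaling argument: time-scale by $|\sigma|$, or use Utkin's equivalent control with $u_\text{eq}$ solving $x^\top Q\dot x=0$. I would then check that the tangential part of $\sigma^{-1}\bar B u_\text{eq}$ vanishes, possibly using the invertibility of $\bar B$ to choose $u_\text{eq}$ appropriately.
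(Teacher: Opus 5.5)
Part (i) is the paper's argument: drop the $\sigma^2R$ term, use $K\succ\kappa_1 I$, and integrate $\dot H\le-\eta$. Your version is more careful than the paper's. The paper bounds $\dot H$ on $\mathcal{S}_l^\text{c}$ and tacitly assumes the trajectory never leaves that set. That is automatic only when $Q$ is a multiple of $I$; otherwise a strong $\bar J$ can push the state into $\{\|x\|_2<l\}$ even though $V$ increases. Your passage to the forward-invariant set $\{V\ge\tfrac12\lambda_{\min}(Q)l^2\}$ repairs this. The price is a constant $\eta$ that can be smaller than the paper's $\kappa_1 l^2\lambda_{\min}(Q\bar B\bar B^\top Q)$. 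The compactness argument also covers a state-dependent $\bar B(x)$.

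The gap is in (ii). Your reaching condition is fine and matches the paper's sign check on $\nabla\sigma^\top f^{\pm}$. But you never derive the sliding dynamics. Calling the singular term ``normal-directed in effect'' and then projecting onto the tangent space is not what Filippov's construction does. A tangential projection of $\bar JQx-\sigma^{-1}g$ keeps exactly the blowing-up residue $\sigma^{-1}\Pi g$ you are worried about. You then leave the decisive step as an open obstacle.

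The missing observation, which is the whole of the paper's proof, is this. The only discontinuous object is the \emph{scalar} $\sigma^{-1}(x)$, multiplying one continuous vector field $g(x):=\bar BK(y)\bar B^\top Qx$.
\begin{itemize}
\item The one-sided fields at $\sigma=\pm\epsilon$ are $f^\pm=\bar JQx\mp\epsilon^{-1}g+O(\epsilon)$, symmetric about $\bar JQx$.
\item Every convex combination therefore lies on the line $\bar JQx+\lambda g$.
\item On $\mathcal S$ we have $x^\top Q\bar JQx=0$ and $x^\top Qg=y^\top K(y)y>0$. The second inequality is where invertibility of $\bar B$ enters: it guarantees $y\neq0$.
\item Tangency then forces $\lambda=0$, i.e.\ weights $\tfrac12,\tfrac12$, giving $\dot x=\bar JQx$.
\end{itemize}
Whether $g$ is parallel to $Qx$ is irrelevant. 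In equivalent-control terms, it is the scalar $\sigma^{-1}$, not $u$, that is replaced by its equivalent value, and that value is $0$.

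Neither remedy you propose would close this.
\begin{itemize}
\item The input $u=-K(y)y$ is fixed by the feedback, so there is nothing to ``choose.'' Treating $\sigma^{-1}\bar Bu$ as an unknown vector constrained only by $x^\top Q\dot x=0$ is underdetermined, and that is exactly where the spurious tangential freedom comes from.
\item Rescaling time by $|\sigma|$ multiplies $\bar JQx$ by $|\sigma|\to0$. In the new time $s$, the rescaled field on $\mathcal S$ has Filippov value $0$ and $dt/ds=0$ there. The tangential motion is lost rather than recovered.
\end{itemize}
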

\begin{proof}
(i): We can bound the dissipation rate in~$\mathcal{S}_l^c$ as
\begin{equation} \label{eq:diss_ineq_fb}
\begin{aligned}
     \dot H(x) & = -x^\top Q (\sigma^2(x)R(x,t) +\bar BK(y)\bar B^\top ) Q x \\
     & \leq -\kappa_1 x^\top Q \bar B\bar B^\top  Q x \leq -\kappa_1 l^2 \lambda_{\min}(Q \bar B\bar B^\top  Q) \\ & =: -\eta <0.
\end{aligned}
\end{equation}
Since $H$ decreases at least at a rate~$\eta$, the time to reach $\mathcal{S}$ where $H(x)=0$ is upper bounded by $H(x(0))/\eta$. 
(ii): On the manifold~$\mathcal{S}$ ($\sigma(x)=0$), sliding-mode solutions should satisfy the differential inclusion $\dot x \in \text{co}\{f^+(x,t),f^-(x,t)\}$~\cite[Section~7.1.2]{slotine_applied_1991}. Here $f^+$ is the vector field just outside $\mathcal{S}$ ($\sigma(x)=\epsilon$, for arbitrarily small $\epsilon>0$) and $f^-$ is the field just on the inside ($\sigma(x)=-\epsilon$). In this case $f^+(x,t)=[\bar J(x,t) - \frac{1}{\epsilon}\bar BK(y)\bar B)^\top] Qx$ and $f^-(x,t)=[\bar J(x,t) + \frac{1}{\epsilon}\bar BK(y)\bar B^\top] Qx$. Note that $\sigma^{-1}$ flips sign on the manifold. We see that $\nabla \sigma(x)^\top f_+(x,t)<0$ and $\nabla \sigma(x)^\top f_-(x,t)>0$ such that the fields push into every point on $\mathcal{S}$ from both sides. The convex combination $\frac{1}{2}f_+ + \frac{1}{2}f_- =\bar J(x,t)Qx$ is tangential to~$\mathcal{S}$ everywhere, proving that a sliding-mode solution satisfies this dynamics.
\end{proof}

To summarize, once the closed-loop system~\eqref{eq:feedback} reaches the ellipsoid~$\mathcal {S}$, there is a sliding-mode solution satisfying
\begin{align*}
     \dot x  & = \bar J(x,t) Qx \\
    y & = \bar B(x)^\top Qx, \quad u = -K(y)y.
\end{align*}
This is a generalization of Example~\ref{ex:circuitexample} for~\eqref{eq:SpH}. In the two-dimensional case, 
if the system designer chooses $\bar J(x,t) = \left[ \begin{smallmatrix}
    0 & -\omega_0 \\ \omega_0 & 0
\end{smallmatrix}\right]$, then $x$ and $y$  in steady state will oscillate with frequency~$\omega_0$ of a desired amplitude (controlled by $Q$), and dissipate unbounded amount of energy into $K(y)$. 

As with all sliding mode controllers, switching can lead to chattering in practice. The problem is here exacerbated by $\sigma^{-1}(x)$ approaching $\pm\infty$ at the switching manifold. Next, it is shown that some of the desirable properties of~\eqref{eq:SpH} are retained (in an appropriate sense) when $\sigma^{-1}(x)$ is regularized.

\section{Managing the Singularity}\label{sec:singularity}
\begin{figure}[tb]
    \centering
    \includegraphics[width=0.65\linewidth]{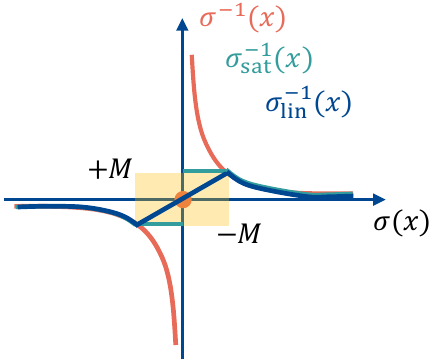}
    \caption{Illustration of the two bounded approximations of $\sigma^{-1}(x)$. The function~$\sigma^{-1}(x)$ is clipped to the interval~$[-M,M]$. $\sigma_\text{sat}^{-1}(x)$ retains a finite discontinuity at $\sigma(x)=0$, whereas $\sigma_\text{sat}^{-1}(x)$ introduces a linear interpolation in the clipped region.}
    \label{fig:sigma_inverses}
\end{figure}
Here, we investigate whether it is possible to retain essential properties of~\eqref{eq:SpH}\footnote{The possibility to act as an infinite energy source on~$\mathcal{S}$ and to satisfy a passivity inequality outside of~$\mathcal{S}$.} while avoiding the infinite singularity on~$\mathcal{S}$.
Consider the two approximations of~$\sigma^{-1}(x)$ illustrated in Fig.~\ref{fig:sigma_inverses}. The first one ($\sigma^{-1}_\text{sat}$) saturates the infinite discontinuity, introducing a finite jump discontinuity~$\pm M$. The second one ($\sigma^{-1}_\text{lin}$) approximates the discontinuity using a steep, linear interpolation in a boundary region (compare with~\cite[Section~7.2]{slotine_applied_1991}). It should be clear that as $M\to\infty$, both approximations converge to $\sigma^{-1}(x)$.

\subsection{Finite Jump Discontinuous Approximation}
Consider the set
\begin{equation*}
    \mathcal{M}:=\{x\in\mathbb{R}^n:\,|\sigma(x)|\leq 1/M \},
\end{equation*}
for some fixed parameter $M>0$. Note that $\mathcal{S} \subset \mathcal{M}$ and that $\mathcal{M}$ converges to $\mathcal{S}$ as $M\to\infty$; see Fig.~\ref{fig:Sets}.
\begin{figure}[tb]
    \centering
    \includegraphics[width=0.8\columnwidth]{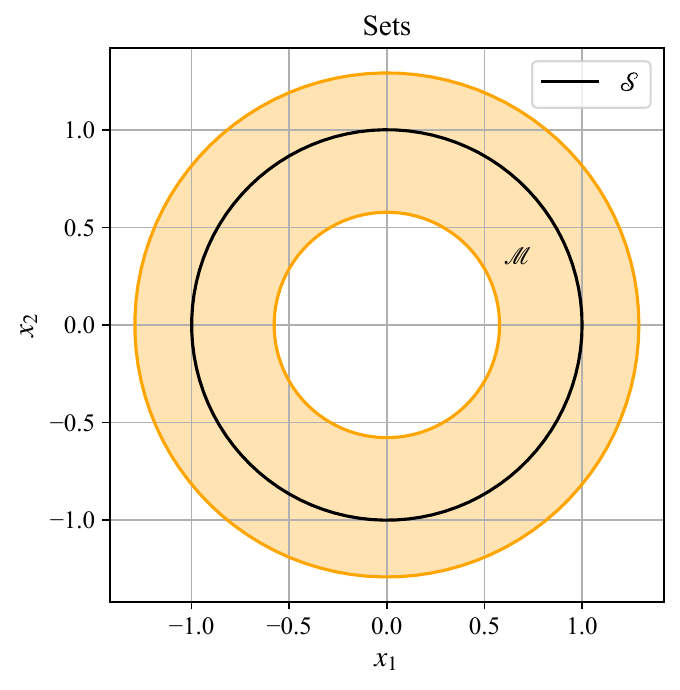}
    \caption{The sets $\mathcal{S}$ (black) and $\mathcal{M}$ (orange), for $Q=I$, $M=3$, and $n=2$.}
    \label{fig:Sets}
\end{figure}
Let us now saturate $\sigma^{-1}$ inside $\mathcal{M}$,
\begin{equation*}
    \sigma^{-1}_\text{sat}(x) := 
    \begin{cases}
         1/\sigma(x), & \text{if $x \in \mathcal{M}^\text{c}:=\mathbb{R}^n\setminus \mathcal{M}$} \\
         M \sign \sigma(x), & \text{if $x\in \mathcal{M}$}
    \end{cases}.
\end{equation*}
The ``inverse'' $\sigma^{-1}_\text{sat}$ has a \emph{finite} jump discontinuity from $-M$ to $M$ when crossing the ellipsoid~$\mathcal{S}$. Consider the saturated (``regularized'') version of~\eqref{eq:SpH}:
\begin{equation} \label{eq:sys_sat}
    \begin{aligned}
        \dot x & = [\bar J(x,t) - \sigma(x)R(x,t)]Qx + \sigma^{-1}_\text{sat}(x)\bar B(x)u   \\
        y & = \bar B(x)^\top Qx.
    \end{aligned}
\end{equation}
In~$\mathcal{M}^\text{c}$, $\eqref{eq:sys_sat}$ is identical to~\eqref{eq:SpH}. In the saturated region~$\mathcal{M}$, we obtain by direct differentiation of $H$ the equality
\begin{equation*}
    \dot H(x) = \sigma(x)x^\top Q \dot x = -\sigma^2(x) x^\top QR(x,t)Q x + M|\sigma(x)|y^\top u.
\end{equation*}
Due to the factor $M|\sigma(x)|$, $H$ is not a storage function with respect to the supply rate  $y^\top u$. However, upon division by~$M|\sigma(x)|$, we obtain the differential dissipation equality
\begin{equation*}
     \frac{d}{dt}\frac{|\sigma(x)|}{M} = \frac{1}{M}\sign \sigma(x) \dot{\sigma}(x) =\underbrace{-\frac{|\sigma(x)|}{M} x^\top QRQ x}_{\leq 0} + y^\top u,
\end{equation*}
for $x \in \mathcal{M}\setminus \mathcal{S}$.
Hence, for $x\not\in\mathcal{S}$, the system~\eqref{eq:sys_sat} satisfies a dissipation inequality with supply rate $y^\top u$ and storage function
\begin{equation}\label{eq:Hsat}
    H_\text{sat}(x) = 
    \begin{cases}
        H(x) + \frac{1}{2M^2}, &  \text{if $x\in \mathcal{M}^\text{c}$} \\
        \frac{|\sigma (x)|}{M}, & \text{if $x\in  \mathcal{M}$}
    \end{cases}.
\end{equation}
The constant term $\frac{1}{2M^2}$ has been introduced to achieve continuity of the storage function across the boundary~$\partial \mathcal{M}$ of $\mathcal{M}$. This constant, of course, does not affect the dynamics. The gradient~$\nabla H_\text{sat}(x)$ is continuous, also across the boundary~$\partial \mathcal{M}$.
We summarize the result in the following theorem.
\begin{theorem}\label{thm:diss_sat}
Consider the regularized system~\eqref{eq:sys_sat}. Trajectories~$x(t) \not \in \mathcal{S}$ satisfy the differential dissipation inequality
\begin{equation*}
    \dot{H}_\text{sat}(x) = -d_\text{sat}(x,t) + y^\top u \leq y^\top u,
\end{equation*}
with storage function $H_\text{sat}\geq 0$ in~\eqref{eq:Hsat} and
continuous dissipation rate
\begin{equation}
    d_\text{sat}(x,t) = 
    \begin{cases}
        \sigma^2(x) x^\top Q R(x,t) Q x, & \text{if $x\in \mathcal{M}^\text{c}$} \\
         \frac{|\sigma(x)|}{M} x^\top QR(x,t)Q x, & \text{if $x\in  \mathcal{M}$}
    \end{cases}.
\end{equation}
\end{theorem}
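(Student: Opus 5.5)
The plan is a case split over the two regions $\mathcal{M}^\text{c}$ and $\mathcal{M}\setminus\mathcal{S}$, computing $\dot H_\text{sat}$ directly from \eqref{eq:sys_sat} in each, and then a check that the pieces glue together continuously across $\partial\mathcal{M}$. The basic tools are the gradient $\nabla H(x)=\sigma(x)Qx$, equivalently $\nabla\sigma(x)=Qx$, and the identity $x^\top Q\bar J(x,t)Qx\equiv 0$, which holds because $\bar J$ is anti-symmetric.

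\textbf{Case $x\in\mathcal{M}^\text{c}$.} Here $\sigma^{-1}_\text{sat}=1/\sigma$, so \eqref{eq:sys_sat} coincides with \eqref{eq:SpH}. The additive constant $\frac{1}{2M^2}$ does not change the derivative, and differentiating gives
\begin{equation*}
\dot H_\text{sat}=\sigma x^\top Q\dot x=-\sigma^2 x^\top QRQx+\bar B^\top Qx\cdot u.
\end{equation*}
The last term is exactly $y^\top u$, as in \eqref{eq:diss_ineq_nom} with $c\equiv1$.

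\textbf{Case $x\in\mathcal{M}\setminus\mathcal{S}$.} Here $\sigma\neq0$, so $H_\text{sat}=|\sigma|/M$ is differentiable with gradient $\frac{\sign\sigma}{M}Qx$. Substituting $\sigma^{-1}_\text{sat}=M\sign\sigma$ gives
\begin{equation*}
\dot H_\text{sat}=\frac{\sign\sigma}{M}x^\top Q\dot x=-\frac{|\sigma|}{M}x^\top QRQx+(\sign\sigma)^2\,y^\top u.
\end{equation*}
Since $(\sign\sigma)^2=1$ off $\mathcal{S}$, this is $-d_\text{sat}+y^\top u$. This is precisely the computation displayed before \eqref{eq:Hsat}.

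\textbf{Sign and regularity properties.} Nonnegativity of $d_\text{sat}$ follows from $R\succ\epsilon_1 I$, which gives $x^\top QRQx\ge0$, together with $\sigma^2\ge0$ and $|\sigma|\ge0$. This yields the inequality $\dot H_\text{sat}\le y^\top u$. Nonnegativity of $H_\text{sat}$ is immediate from its definition.

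Continuity has to be checked on $\partial\mathcal{M}$, where $|\sigma|=1/M$:
\begin{itemize}
\item $H_\text{sat}$: the outer branch is $\tfrac12\sigma^2+\tfrac1{2M^2}=\tfrac1{M^2}$ and the inner branch is $|\sigma|/M=\tfrac1{M^2}$, so they agree.
\item $\nabla H_\text{sat}$: the outer branch is $\sigma Qx$ and the inner branch is $\frac{\sign\sigma}{M}Qx=\sigma Qx$ there, so they agree.
\item $d_\text{sat}$: the outer branch is $\sigma^2 x^\top QRQx=\frac{1}{M^2}x^\top QRQx$ and the inner branch is $\frac{|\sigma|}{M}x^\top QRQx=\frac1{M^2}x^\top QRQx$, so they agree. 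Within each region $d_\text{sat}$ is continuous if $R$ is, which is implicitly assumed.
\end{itemize}

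\textbf{Main obstacle.} The delicate point is the boundary $\partial\mathcal{M}$ and the fact that solutions are understood in the Filippov sense. I would argue it as follows. Because $H_\text{sat}$ is $C^1$ on the open set $\mathbb{R}^n\setminus\mathcal{S}$, and the vector field of \eqref{eq:sys_sat} is continuous there (the only discontinuity of $\sigma^{-1}_\text{sat}$ is on $\mathcal{S}$), trajectories avoiding $\mathcal{S}$ are classical $C^1$ solutions. The chain rule then applies, and the piecewise formulas together give the stated equality pointwise, including at times when the trajectory lies on $\partial\mathcal{M}$, since both one-sided expressions coincide there. Note that on $\partial\mathcal{M}$ itself $\sigma^{-1}_\text{sat}=M\sign\sigma=1/\sigma$, so the input gain is continuous across $\partial\mathcal{M}$ as well.
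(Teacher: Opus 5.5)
Your proof is correct and follows essentially the same route as the paper. Both split into the cases $\mathcal{M}^\text{c}$ and $\mathcal{M}\setminus\mathcal{S}$, differentiate directly in each, and match values and gradients on $\partial\mathcal{M}$; your differentiating $|\sigma|/M$ via its gradient is equivalent to the paper's dividing $\dot H$ by $M|\sigma|$.

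One minor overstatement: trajectories off $\mathcal{S}$ are classical $C^1$ solutions only if $u$ and $R$ are continuous in $t$. In general they are only absolutely continuous, but since $H_\text{sat}$ is $C^1$ on $\mathbb{R}^n\setminus\mathcal{S}$, the identity still holds for almost every $t$, which is all a differential dissipation inequality requires.
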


The system~\eqref{eq:sys_sat} has a milder discontinuity compared to~\eqref{eq:SpH}, but still satisfies a passivity-inequality outside~$\mathcal{S}$ \emph{and} admits a sliding-mode solution on~$\mathcal{S}$ in closed loop, following similar arguments as in~Proposition~\ref{prop:impact_time}. In summary, the saturated system retains essential properties of~\eqref{eq:SpH}, but the convergence to~$\mathcal{S}$ is slower due to the saturation.
The jump across~$\mathcal{S}$ may still be a cause for chattering effects in practice, and next we remove the jump altogether.

\subsection{Linear Boundary Layer Approximation}
A common way to approximate a jump discontinuity is by using a linear interpolation of high gain in a boundary region; see~\cite[Figure~14.7]{khalil_nonlinear_2002}. In this vein, consider
\begin{equation*}
    \sigma^{-1}_\text{lin}(x) :=
    \begin{cases}
        1/\sigma(x), &\text{if $x \in  \mathcal{M}^\text{c}$} \\
        M^2 \sigma(x), & \text{if $x\in \mathcal{M}$}
    \end{cases},
\end{equation*}
which is linear in~$\sigma(x)$ inside~$\mathcal{M}$, with slope~$M^2$; see Fig.~\ref{fig:sigma_inverses}.
Note that $\sigma_\text{lin}^{-1}(x)=0$ on $\mathcal{S}$ and $\sigma_\text{lin}^{-1}(x)$ is continuous for all~$x$.
Consider the following regularized version of~\eqref{eq:SpH} that is continuous:
\begin{equation} \label{eq:sys_lin}
    \begin{aligned}
        \dot x & = [\bar J(x,t) - \sigma(x)R(x,t)]Qx + \sigma^{-1}_\text{lin}(x)\bar B(x)u \\
        y & = \bar B(x)^\top Qx.
    \end{aligned}
\end{equation}
In~$\mathcal{M}^\text{c}$, \eqref{eq:sys_lin} is identical to~\eqref{eq:SpH}. Inside~$\mathcal{M}$, we can differentiate $H$ and obtain
\begin{align*}
    \dot H(x) & = \sigma(x)x^\top Q \dot x =  -\sigma^2(x) x^\top QR(x,t)Q x + M^2\sigma^2(x)y^\top u \\
    & =  -2H(x) x^\top QR(x,t)Q x + 2M^2H(x)y^\top u.
\end{align*}
Because of the factor $2M^2H(x)$ in front of $y^\top u$, $H$ is not a storage function with respect to the supply rate~$y^\top u$. However, for $x\not\in\mathcal{S}$ and upon division by $2M^2 H(x)$, we can rewrite the equality as
\begin{align*}
   \frac{1}{2M^2} \frac{d}{dt} \ln H(x) = \underbrace{-\frac{1}{M^2}x^\top QR(x,t)Q x}_{\leq 0} + y^\top u.
\end{align*}
Similar to the previous systems, there is a storage function with respect to~$y^\top u$:
\begin{equation}\label{eq:Hlin}
    H_\text{lin}(x) = 
    \begin{cases}
        H(x) - \frac{1+\ln 2M^2}{2M^2}, &  \text{if $x\in \mathcal{M}^\text{c}$} \\
        \frac{1}{2M^2} \ln H(x), & \text{if $x\in  \mathcal{M}$}.
    \end{cases}
\end{equation}
The constant added to $H(x)$ in $\mathcal{M}^\text{c}$ is again introduced to achieve continuity of~$H_\text{lin}$ across~$\partial \mathcal{M}$. The gradient $\nabla H_\text{lin}(x)$ is continuous, also across $\partial \mathcal{M}$.
In contrast to the earlier storage functions, $H_\text{lin}(x)$ has \emph{no lower bound} since  $H_\text{lin}(x)\to -\infty$ as $x\to\mathcal{S}$. Hence, \eqref{eq:sys_lin} is not passive but rather \emph{cyclo-dissipative}~\eqref{eq:cyclodiss} with respect to the supply rate~$y^\top u$; see~\cite{willems_qualitative_1974,van_der_schaft_cyclo-dissipativity_2021}. This is also called cyclo-passivity~\cite{moylan_dissipative_2014}.
\begin{theorem}
Consider the regularized system~\eqref{eq:sys_lin}. Trajectories~$x(t)\not \in \mathcal{S}$ satisfy the differential dissipation inequality
\begin{equation*}
    \dot{H}_\text{lin}(x) = -d_\text{lin}(x,t) + y^\top u \leq y^\top u,
\end{equation*}
with storage function~$H_\text{lin}$ in~\eqref{eq:Hlin} and continuous dissipation rate
\begin{equation}
    d_\text{lin}(x,t) = 
    \begin{cases}
        \sigma^2(x) x^\top Q R(x,t) Q x, &  \text{if $x\in \mathcal{M}^\text{c}$} \\
        \frac{1}{M^2}x^\top QR(x,t)Q x, &  \text{if $x\in  \mathcal{M}$}
    \end{cases},
\end{equation}
establishing cyclo-passivity of~\eqref{eq:sys_lin}.
\end{theorem}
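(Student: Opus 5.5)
The plan is a direct computation of $\dot H_\text{lin}$ along trajectories, handled separately on $\mathcal{M}^\text{c}$ and on $\mathcal{M}\setminus\mathcal{S}$. After that we check that the pieces fit together across $\partial\mathcal{M}$, and we deduce cyclo-passivity by integrating around closed orbits.

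\emph{On $\mathcal{M}^\text{c}$.} Here \eqref{eq:sys_lin} coincides with \eqref{eq:SpH}, and $H_\text{lin}$ differs from $H$ only by a constant. Using $\nabla H(x)=\sigma(x)Qx$, $x^\top Q\bar J Qx\equiv 0$ and $\sigma^{-1}_\text{lin}(x)=1/\sigma(x)$, we get
\[
\dot H_\text{lin}=-\sigma^2(x)\,x^\top QRQx+x^\top Q\bar B u=-d_\text{sat}(x,t)+y^\top u,
\]
where $d_\text{sat}(x,t)$ denotes the first branch of $d_\text{lin}$.

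\emph{On $\mathcal{M}\setminus\mathcal{S}$.} Here $H(x)>0$, so $\ln H$ is smooth. We reuse the identity derived just before the theorem:
\[
\frac{1}{2M^2}\frac{d}{dt}\ln H=\frac{\dot H}{2M^2H}=-\frac{1}{M^2}x^\top QRQx+y^\top u .
\]
This is the claimed equality with $d_\text{lin}=\frac{1}{M^2}x^\top QRQx$. Since $R\succ\epsilon_1 I$, the dissipation rate is nonnegative on both pieces, which gives the inequality $\dot H_\text{lin}\le y^\top u$.

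\emph{Gluing across $\partial\mathcal{M}$.} On $\partial\mathcal{M}$ we have $\sigma^2=1/M^2$, so $H=\frac{1}{2M^2}$. The inner branch then equals $\frac{1}{2M^2}\ln\frac{1}{2M^2}=-\frac{\ln 2M^2}{2M^2}$. The outer branch equals $\frac{1}{2M^2}-\frac{1+\ln 2M^2}{2M^2}$, which is the same value, so $H_\text{lin}$ is continuous.

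For the gradients, the inner gradient is $\frac{\sigma Qx}{2M^2H}=\frac{Qx}{M^2\sigma}$, which on the boundary equals $\pm Qx/M$. The outer gradient is $\sigma Qx=\pm Qx/M$, so the gradient is continuous as well. The dissipation rates also agree there: $\sigma^2=1/M^2$ matches the factor $1/M^2$, so $d_\text{lin}$ is continuous.

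Because $H_\text{lin}$ is $C^1$ away from $\mathcal{S}$, the chain rule holds along absolutely continuous (Filippov, here actually classical since \eqref{eq:sys_lin} is continuous) solutions. Hence the differential inequality holds for almost all $t$, including at crossings of $\partial\mathcal{M}$.

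\emph{Cyclo-passivity.} Take a closed trajectory that avoids $\mathcal{S}$. Integrating the inequality over one period gives $0=\oint\dot H_\text{lin}\,dt\le\oint y^\top u\,dt$, which is \eqref{eq:cyclodiss}.

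\emph{Main obstacle.} The delicate point is the set $\mathcal{S}$ itself, where $H_\text{lin}=-\infty$. I would first argue that cycles meeting $\mathcal{S}$ need separate treatment. Inside $\mathcal{M}$ the factor $M^2\sigma^2$ in $\dot H$ means that $\sigma\equiv 0$ is invariant when $\bar J$ flows tangentially, since $\dot\sigma=-\sigma x^\top QRQx+M^2\sigma\,y^\top u$. Uniqueness of solutions for this locally Lipschitz scalar equation then shows that trajectories starting off $\mathcal{S}$ never reach $\mathcal{S}$ in finite time. So the statement restricted to trajectories $x(t)\notin\mathcal{S}$ is consistent, and closed orbits lie entirely on $\mathcal{S}$ or entirely off it. On $\mathcal{S}$, the state equation gives $\dot x=\bar JQx$, independent of $u$; this establishes only that $\mathcal{S}$ is invariant and the motion there is tangential. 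The energy interpretation on $\mathcal{S}$ I would therefore leave as the intended exception, consistent with the theorem's restriction to $x(t)\notin\mathcal{S}$.
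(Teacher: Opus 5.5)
Your proposal is correct and follows essentially the paper's route. You differentiate $H$ region-wise, divide by $2M^2H$ on $\mathcal{M}\setminus\mathcal{S}$, and match values, gradients and dissipation rates across $\partial\mathcal{M}$.

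Your extra observation is a useful justification of the restriction $x(t)\notin\mathcal{S}$, which the paper leaves implicit: $\dot\sigma=\sigma\,(-x^\top QRQx+M^2y^\top u)$ inside $\mathcal{M}$ makes $\mathbb{R}^n\setminus\mathcal{S}$ invariant. On $\mathcal{S}$ itself, $u$ does not enter $\dot x$ but $y$ is generally nonzero, so \eqref{eq:cyclodiss} can fail there. The cyclo-passivity is therefore really a statement about that invariant complement.
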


The system~\eqref{eq:sys_lin} is continuous, satisfies a passivity inequality, \emph{and} 
can provide an infinite amount of energy by means of its storage function with no lower bound. Hence, it retains essential properties of~\eqref{eq:SpH}, but not by means of a sliding mode.
We can interpret the cyclo-dissipativity of~\eqref{eq:sys_lin} as if the infinite available supply of~\eqref{eq:SpH} on~$\mathcal{S}$ has been ``distributed'' over the boundary layer~$\mathcal{M}$. In particular, as $M\to\infty$, $\mathcal{M}$ converges to $\mathcal{S}$ and the difference in practice is negligible. For instance, on the boundary $x\in\partial \mathcal{M}$, we have the storage $H_\text{lin}(x)=\frac{1}{2M^2} -  \frac{1+\ln 2M^2}{2M^2}\to 0$ as $M\to\infty$.

\section{Numerical Examples}\label{sec:examples}
In this section, we illustrate some results using simulations. We use the regularized system~\eqref{eq:sys_lin}, since~\eqref{eq:SpH} is hard to simulate directly due to the infinite singularity on~$\mathcal{S}$.

\begin{example}[Frequeny Tracking with Static Passive Load]
\label{ex:static_passive}
Consider a case closely resembling the scenario in Section~\ref{sec:stab_interconnect} in two dimensions with $x=(x_1,x_2)$. Choose
\begin{align*}
    \bar J = \begin{bmatrix}
    0 & -\omega_0 \\ \omega_0 & 0
\end{bmatrix}, \quad R=Q=I, \quad \bar B = \begin{bmatrix}
    1 \\ 0
\end{bmatrix}, \quad \omega_0= 2\pi \  \text{rad/s},
\end{align*}
and saturation~$M=3$.
Since $\bar B$ is not invertible and due to saturation, Proposition~\ref {prop:impact_time} cannot be applied directly. We simulate the system under two different (linear) interconnection conditions~\eqref{eq:static_load}: $K=1$ and $K=5$. The phase portraits are shown in Fig.~\ref{fig:phase_portrait_static} and the corresponding  first-element trajectories~$x_1(t)$, $t\geq 0$, in~Fig.~\ref{fig:time_domain_static}. The simulations show finite-time convergence to the desired frequency and amplitude ($\mathcal{S}=\{x\,:\, x^\top x=1\}$) under both interconnections and different initial states. Some transient chatter around~$\mathcal{S}$ can be observed, since the numerical solver does not hit the surface~$\mathcal{S}$ perfectly.

\begin{figure}
        \centering
        \includegraphics[width=0.9\columnwidth]{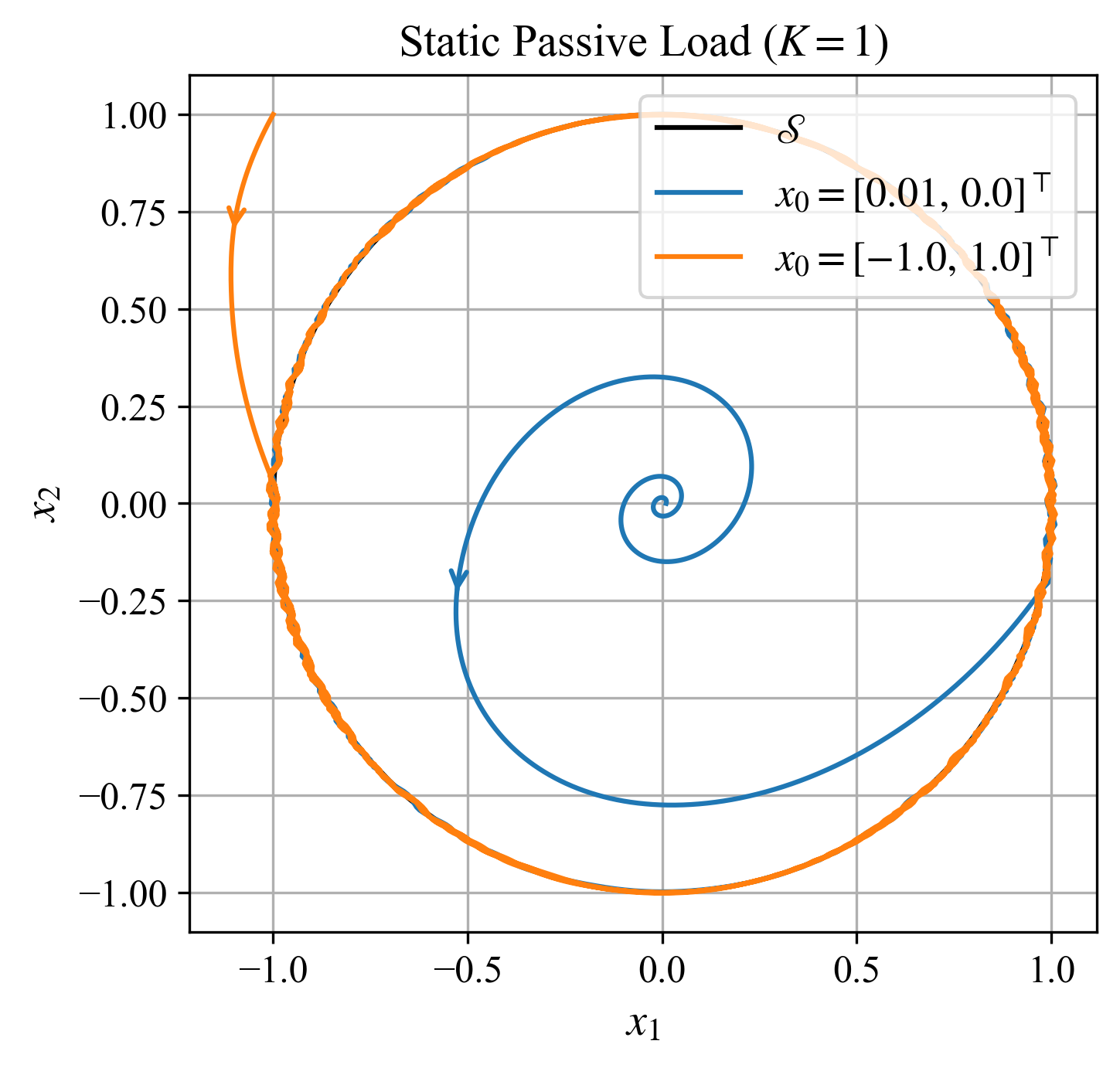}
        \includegraphics[width=0.9\columnwidth]{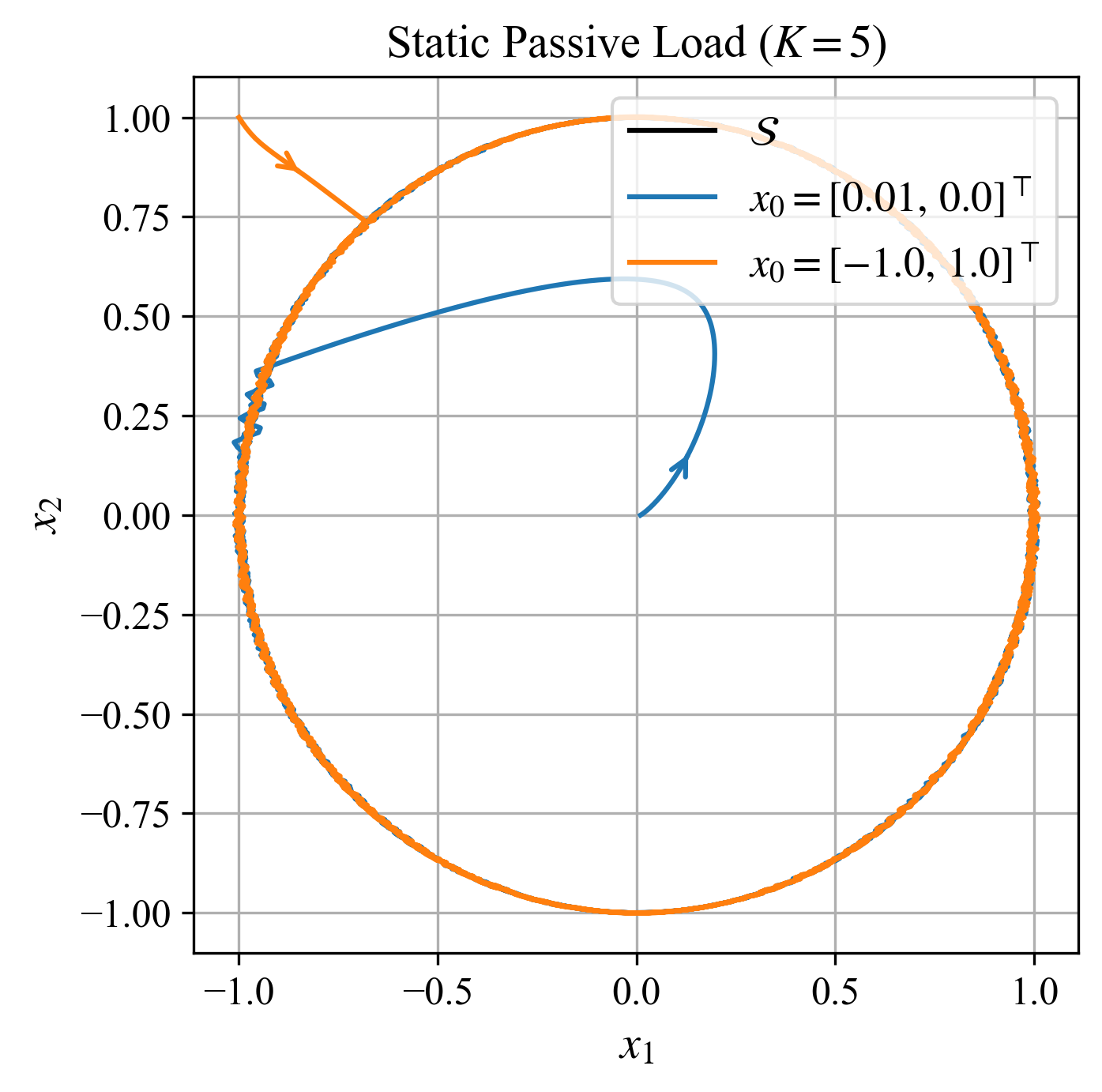}
        \caption{The phase portraits from Example~\ref{ex:static_passive}. The state converges quickly to~$\mathcal{S}$ for different initial states, under two different interconnections~$K$.}
        \label{fig:phase_portrait_static}
\end{figure}
\begin{figure}
        \centering
        \includegraphics[width=0.9\columnwidth]{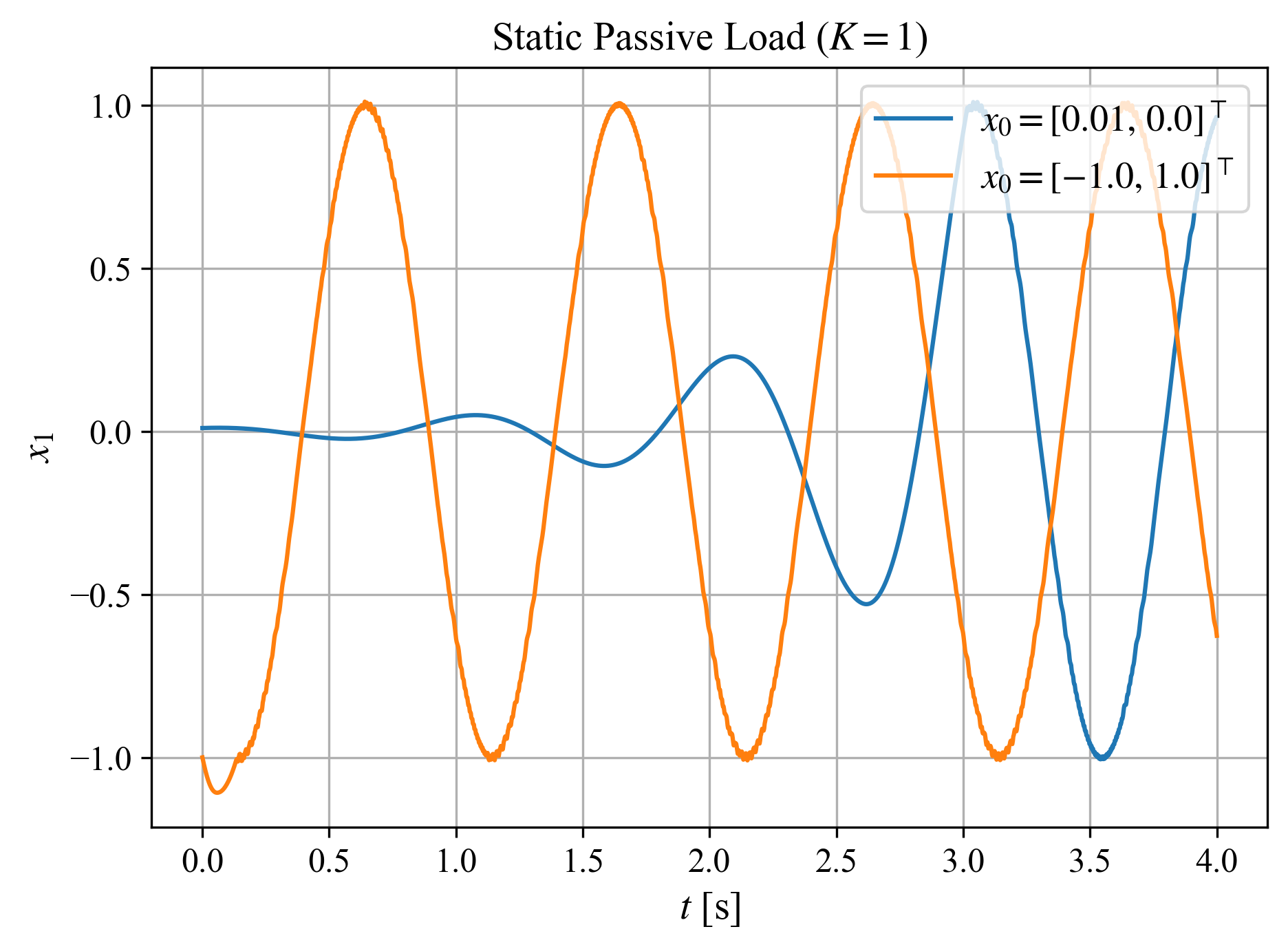}
        \includegraphics[width=0.9\columnwidth]{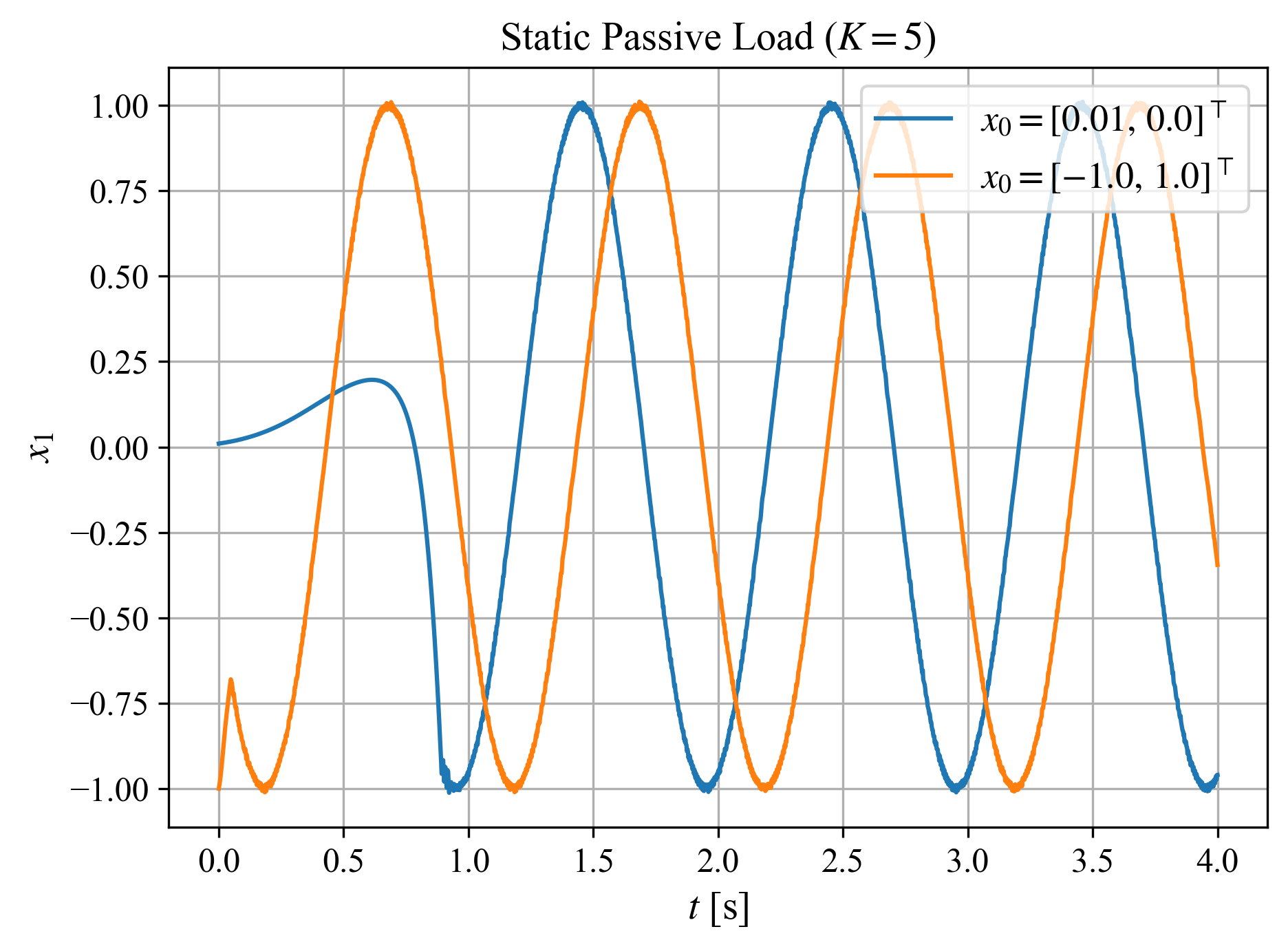}
        \caption{State trajectories~$x_1(t)$ from Example~\ref{ex:static_passive}. The state converges quickly to the desired frequency $\omega_0=2\pi$~\text{rad/s} of amplitude~$1$, under two different interconnections~$K$.}
        \label{fig:time_domain_static}
\end{figure}
\end{example}

\begin{example}[Phase Tracking with Dynamic Passive Load]
\label{ex:dynamic_passive}
We extend Example~\ref{ex:static_passive} and let the passive, interconnected system be \emph{dynamic} by using the positive real transfer function ($s$ is the Laplace variable)
\begin{equation*}
    K(s) = \frac{s+3}{s^2+4s+4},
\end{equation*}
and vary the dissipation in~\eqref{eq:sys_lin} by using~$R= r\cdot I_2$, $r>0$. Instead of tracking a frequency, suppose we would like the system state to converge to a point on~$\mathcal{S}$ with phase/argument $\phi_\text{ref}=3\pi/4 \ \text{rad}$. This can be achieved by adding ``phase PI control'' through the matrix $\bar J$ and an additional state variable~$x_\text{i}$:
\begin{align*}
    \bar J(x,x_i) & = \begin{bmatrix}
        0 & -\omega_0-k_\text{p}e(t)-k_\text{i}x_\text{i}(t) \\
        \omega_0+k_\text{p}e(t)+k_\text{i}x_\text{i}(t) & 0
    \end{bmatrix}\\
    e(t) & = \phi_\text{ref} - \arg(x_1(t)+jx_2(t)) \\
    \dot x_\text{i}(t)  & = e(t),
\end{align*}
where $j$ is the imaginary unit and the $x_\text{i}$-dynamics is augmented with~\eqref{eq:sys_lin}. We pick the tuning parameters $k_\text{p}=50$ and
$k_\text{i}=200$ next. 
Intuitively, the matrix $\bar J$ rotates the vector~$x$ until it stops at the desired angle, and the dissipation matrix $R\succ 0$ regulates convergence to~$\mathcal{S}$.

Simulations of the interconnected dynamics for three different values of $r$ are shown in Fig.~\ref{fig:phase_portrait_dynamic}. Interestingly, while the trajectories initially converge to~$\mathcal{S}$, the state appears to become ``charged'' and can pull out of~$\mathcal{S}$. Eventually, the state gets pulled back and converges to the desired target on~$\mathcal{S}$. By increasing the dissipation~$r$, the detour behavior can be controlled. The example shows that with dynamic passive loads, it can happen that the vector field points out of~$\mathcal{S}$, in contrast to the static case considered in Proposition~\ref{prop:impact_time}.
A rigorous stability analysis, in cases such as this one, remains an interesting direction for future work.
\end{example}
\begin{figure}
      \centering
      \includegraphics[width=0.9\columnwidth]{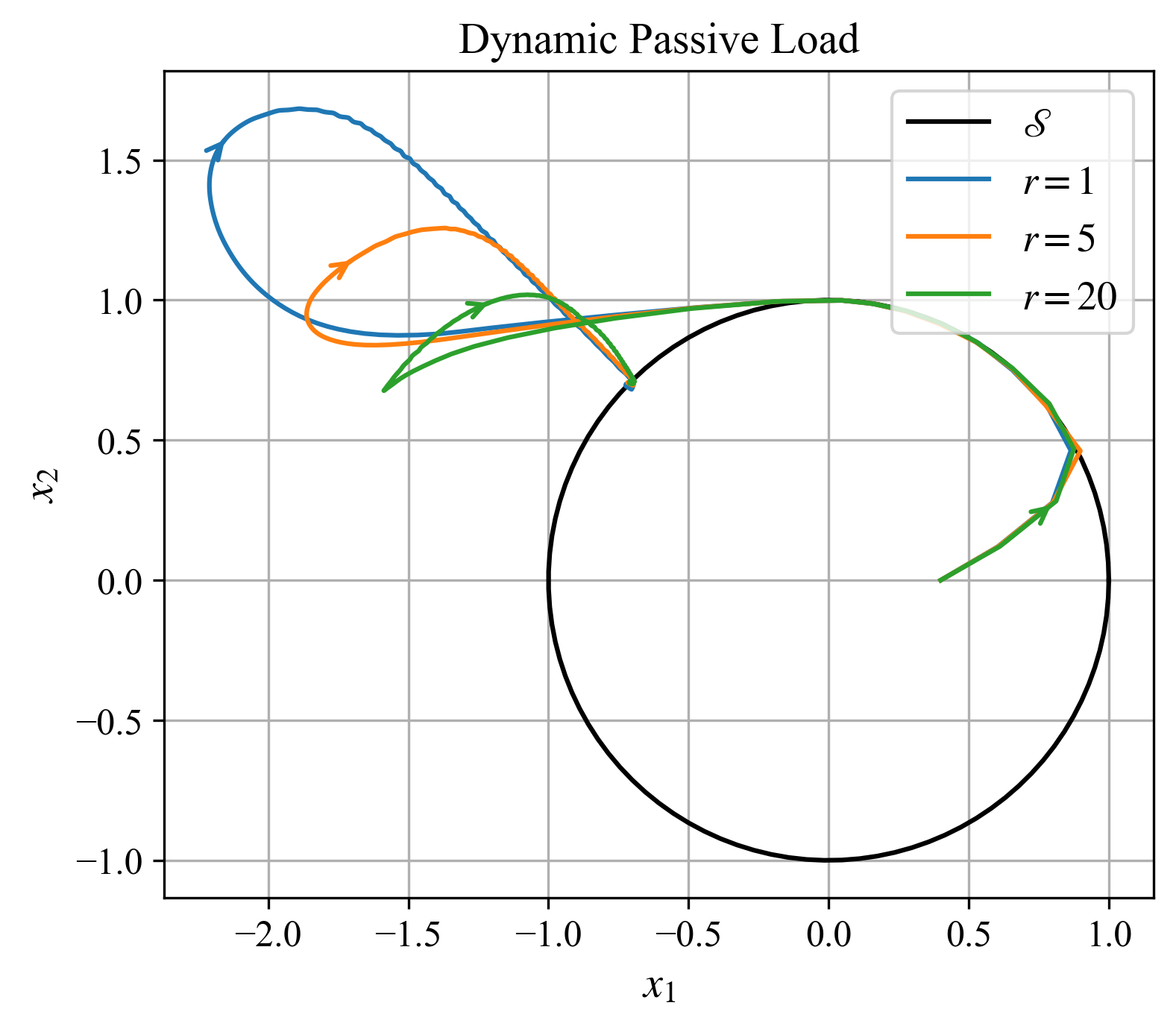}
      \caption{The phase portraits from Example~\ref{ex:dynamic_passive}. The state converges to the desired point on~$\mathcal{S}$ for different dissipations, but leaves the vicinity of~$\mathcal{S}$ briefly.}
      \label{fig:phase_portrait_dynamic}
\end{figure}


\section{Conclusions} \label{sec:conclusion}
In this paper, we introduce a class of singular port-Hamiltonian systems and analyze their properties. The systems are not passive; they can serve as infinite energy sources. Yet they satisfy local passivity inequalities that we used to prove convergence to a desired ellipsoidal surface upon interconnection with passive systems (loads). We have also investigated two related systems with regularized singularities to ease implementation. They both exhibit properties similar to those of the original class of systems, although one of them is in fact cyclo-passive. An interesting topic for future research is more general stability analysis under dynamic interconnections, as illustrated in Example~\ref{ex:dynamic_passive}. Furthermore, possible connections to the IDA--PBC framework~\cite{ortega_putting_2001,ortega_interconnection_2002} (see Remark~\ref{rem:IDA-PBC}), and dissipativity theory for switched and hybrid systems~\cite{zhao_dissipativity_2005,teel_asymptotic_2010} should also be investigated.


\section*{Usage of Generative AI}
ChatGPT~\cite{noauthor_chatgpt_nodate} was used to assist with language refinement.
Grammarly~\cite{noauthor_grammarly_nodate} was used for grammar and spelling checks. All content was reviewed and verified by the authors.


\printbibliography

\end{document}